\newtheorem{theorem}{Theorem}
\newtheorem{corollary}{Corollary}
\newtheorem{lemma}{Lemma}
\newtheorem{defn}{Definition}
\newtheorem{prob}{Problem}
\newtheorem{remark}{Remark}
\newtheorem{example}{Example}
\newcommand{\abs}[1]{\left\lvert{#1}\right\rvert}
\newcommand{\pmat}[1]{\begin{pmatrix}#1\end{pmatrix}}
\newcommand{\R}{\mathbb{R}}
\newcommand{\D}{\mathcal{D}}
\newcommand{\last}{\textit{Last}}
\newcommand{\run}{\textit{Run}}
\newcommand{\Output}{\textit{Output}}
\newcommand{\exec}{\textit{Exec}}
\newcommand{\f}{fault}
\newcommand{\g}{ideal}
\newcommand{\A}{\mathcal{A}}
\newcommand{\Exec}{\text{Exec}}
\newcommand{\GM}{\text{GetMatrix}}
\newcommand{\QD}{Q_{\D}}
\newcommand{\Qdash}{Q'}
\newcommand{\Q}{Q}
\newcommand{\T}{T}
\title[]{Learning event-driven switched linear systems}
\author{Atreyee Kundu and Pavithra Prabhakar}
\thanks{Atreyee Kundu is with the Department of Electrical Engineering, Indian Institute of Science Bangalore, India, E-mail: atreyeek@iisc.ac.in. Pavithra Prabhakar is with the Department of Computer Science, Kansas State University, USA, E-mail: {pprabhakar@ksu.edu}}
\keywords{}
\date{\today}
\begin{document}
    \begin{abstract}
	   We propose an automata theoretic learning algorithm for the identification of black-box switched linear systems whose switching logics are event-driven. A switched system is expressed by a deterministic finite automaton (FA) whose node labels are the subsystem matrices. With information about the dimensions of the matrices and the set of events, and with access to two oracles, that can simulate the system on a given input, and provide counter-examples when given an incorrect hypothesis automaton, we provide an algorithm that outputs the unknown FA. Our algorithm first uses the oracle to obtain the node labels of the system run on a given input sequence of events, and then extends Angluin's \(L^*\)-algorithm to determine the FA that accepts the language of the given FA. We demonstrate the performance of our learning algorithm on a set of benchmark examples.
    \end{abstract}

\maketitle

    \section{Introduction}
\label{s:intro}
    Cyber-physical systems that consist of software controlled physical systems have transformed today's transportation, energy and healthcare sectors. Rigorous analysis of these systems has become inevitable given the safety critical environments in which they are deployed. Formal analysis requires a formal model of the system to be analyzed. Often, a model of the system is unavailable, due to, for instance, unknown dynamics or proprietary software, or a complex model maybe available, which is unamenable to analysis. In either case, it is necessary to have techniques to learn such models, from minimalistic knowledge of the system, and some basic operations that are feasible as in a black box setting. In this paper, we investigate the learning problem for certain subclasses of models for cyber-physical systems, wherein, the digital logic (cyber part) is captured as a event-driven deterministic finite state automaton, and the physical system is captured using discrete-time linear dynamics.

    In this paper, we focus on \emph{event-driven switched linear systems}. In general, switched systems consist of a finite set of subsystems governed by a time-varying switching signal \cite[\S 1.1.2]{Liberzon}. We focus on linear subsystems and sets of switching logic that are event-driven in the sense that the active subsystem at any time instant depends on the active subsystem at the previous time instant and the event that was carried out at that time instant. Such switched systems arise naturally in, for example, the setting of a robot-aided neurosurgery \cite{Comparetti2014}. 
    Consider, for instance, a robot that has five modes of operation: (i) Homing, (ii) Autonomous, (iii) Hands-on, (iv) Tele-operation, and (v) Steady. In each mode, it has a certain dynamics, and the mode can change based on certain events. The surgeon is provided with a GUI interface where she can perform an event by pressing a button or touching the robot. Based on the current mode of operation of the robot (subsystem) and the event carried out by the surgeon, the next mode of operation of the robot (subsystem) is selected.  To have successful coordination between the human and the robot, it is imperative to understand the functioning of the robot. 
    Hence, we are interested in developing system identification techniques for event-driven switched linear systems by employing automata theoretic learning techniques.
    Note that we can assume that we have the ability to stimulate the robot with input event sequences, and observe its behavior (execution). Our algorithm allows one to compute a "hypothesis" switched system based on such observations on appropriate input event sequences. In addition, if we have the ability to check if the hypothesized system is correct, and obtain a counter-example execution in case it is incorrect, our algorithm can learn the correct system in finite time.

    We express an event-driven switched system as an event-deterministic finite automaton (FA), whose node labels are the subsystem matrices. The execution of a switched system depends on an initial (continuous) state and a sequence of input events, and consists of the sequence of states obtain by applying the discrete-time linear dynamics associated with (discrete) state labels that are encountered along the path in the finite automaton induced by the input event sequence.  We assume that the set of events that causes switches between the subsystems and the dimension of the subsystems matrices are known to the Learner. In addition, she has access to two Oracles:
    \begin{itemize}[label = \(\circ\), leftmargin = *]
        \item An IO-generator, which given an initial state and an input (sequence of events) outputs the execution. Such a IO-generator is typically available for any black box for which an input can be provided and output observed. 
        \item An Equivalence Checker, which given a hypothesis finite automaton, checks if the language of the hypothesized finite automaton is the same as that underlying the black box automaton, and if they are not equal, provides an input on which the two automata have different outputs. While such an equivalence checker/counter-example generator might be challenging to build, it might still be possible to generate counter-examples by running multiple IO-generator queries and observing the output.
    \end{itemize}

    Under the above assumptions, our learning algorithm has the following phases:
    \begin{itemize}[label = \(\circ\), leftmargin = *]
        \item First, the Learner performs a constant number of IO-generator queries to obtain the matrix labeling the last node of the automaton path corresponding to an input, referred to as the \emph{$\Output$}.
        \item The switched system learning problem is then reduced to a finite automaton learning problem with \emph{multiple} labels. We provide an extension of Angluin's \(L^*\)-algorithm \cite{Angluin1987} to the multiple labels setting, using the notion of $\Output$ as our observation.
    \end{itemize}
    The key insight of our algorithm is that we are able to separate the learning tasks into a dynamics identification task and an automata learning tasks.
    We are able to provide guarantees that our learning algorithm terminates in bounded time and outputs a correct language equivalent switched system.  Our algorithm is tested on a set of benchmark examples. 
    
    The remainder of this paper is organized as follows: We present a discussion on existing techniques for both switched systems identification and automaton learning in \S\ref{s:prior_works}. In \S\ref{s:prob_stat} we formulate the problem under consideration. Our results appear in \S\ref{s:res}. We also discuss various features of our learning algorithm in this section. A set of numerical experiments is presented in \S\ref{s:num_ex}. We conclude in \S\ref{s:concln} with a brief mention of future research directions.



    \section{Related Work}
\label{s:prior_works}
In this section, we provide a brief overview of related work in the area of system identification and automata based learning.

\subsection{Systems identification techniques for switched systems}
\label{ss:lit_survey}
    The knowledge of mathematical models of the subsystems (e.g., transfer functions, state-space models, or kernel representations) and restrictions on the set of admissible switching signals are key requirements for the design of decision and control algorithms for switched systems. As a result, system identification techniques for these systems are widely studied, see e.g., the survey paper \cite{Garulli2012}, the tutorial paper \cite{Paoletti2007} and the references therein.

    In general, the problem of identification of switched systems is known to be NP-hard \cite{Lauer2016} and is typically performed by collecting input (possibly controlled) - output (possibly noisy) data during the operations of the system. The available techniques can be classified broadly into two categories: (A) offline methods and (B) online methods. In case of the former, access to all data at once is assumed, while in case of the latter, data are available in a streaming (online) fashion. The offline methods include:
    (a) Algebraic method \cite{Vidal2003} that uses Veronese embedding to decouple the tasks of estimating the subsystems parameters and the switching signals. An exact solution is obtained when the subsystems evolution and the available data are noise-free. This technique is extended to the setting where subsystems evolutions and/or the available data are noisy in \cite{Ozay2015}. The authors convert the algebraic method to a rank minimization problem that is solved by employing a semi-definite program.
    (b) Mixed integer programming method \cite{Roll2004} that involves linear or quadratic programming techniques whose solutions are shown to converge to global optima. The proposed set of algorithms is particularly useful in the settings where obtaining data is an expensive process and relatively few data are available.
    (c) Clustering method \cite{Ferrari-Trecate2003} that combines clustering, linear identification and pattern recognition techniques. The identification of the subsystems and the state-space regions in which they are active is performed by avoiding a commonly used gridding technique. In addition, the available data are classified in a carefully designed feature space that allows reconstruction of different subsystems that have similar parameters but operate on different regions.
    (d) Bayesian method \cite{Juloski2005} that treats the subsystems parameters as random variables described by their probability density functions. The identification process involves computation of a posteriori probability density functions of the subsystems parameters and employs the information derived in the previous steps of the identification process for estimating the state-space regions in which various subsystems are active.
    (e) Bounded error identification method \cite{Bemporad2005} that first classifies the available data and obtains estimates of the number of subsystems and parameters of the subsystems by solving a set of linear inequalities, and then employs a refinement procedure to reduce misclassifications. An upper bound on the identification error is maintained as a tuning parameter at all times during the identification process.
    (f) Sparse optimization method \cite{Bako2011} that poses the identification problem as an NP-hard combinatorial \(\ell_0\) optimization problem. Sufficient conditions for solving it are presented by employing relaxations to convex \(\ell_1\)-norm minimization problems from compressed sensing literature. It is demonstrated that a priori clustering of the available data corresponding to the various subsystems is not a necessary step for system identification. The online methods, receive data at each time step and perform two tasks: identification of the subsystem whose dynamics is being followed at that time step and updation of the estimates of the subsystems parameters. In \cite{Vidal2008} the author studies online identification of switched systems as an extension of the offline algebraic method (see (a) above). The works \cite{Bako2011_1,Goudjil2016,Du2018} employ two-step procedures for online identification of switched systems. First, candidate estimates for each subsystem are built, and second, at every time, the active subsystem is determined by assigning the data to one of the candidates according to some criteria and the estimates of the candidates are updated. In particular, \cite{Bako2011_1} employs prior or posterior residual error for the identification of active subsystems and recursive least squares for updating the candidate estimates, while \cite{Goudjil2016} employs minimization of prior residual error for the identification of active subsystems and a modified outer bounding ellipsoid algorithm for the updation of candidate estimates. The residual error approach for the identification of active subsystem at every time step is modified to a robust version by incorporating an upper bound on estimation error in \cite{Du2018}. The authors employ a randomized Kaczmaz algorithm and normalized least mean squares towards updating the candidate estimates of the subsystems parameters. In this paper we consider a paradigm shift and explore active learning techniques for system identification of switched systems.
\subsection{Automata learning techniques}
\label{ss:automat_lang}
The \(L^*\)-algorithm learns a minimal deterministic finite automaton that accepts a certain regular language by employing two types of queries: membership query and equivalence query. A Teacher aids the learning process by answering whether a given string is in the language as well as whether an automaton hypothesized by the Learner is correct or not. \(L^*\) is an online learning algorithm in the sense that the Learner is allowed to ask further queries and enlarge her database as and when needed. This algorithm is extended to the learning of non-deterministic finite automaton in \cite{Bollig2009}, probabilistic finite automaton in \cite{Tzeng1992}, oracle automaton for software testing in \cite{Barr2015}, input-output automaton in \cite{Aarts2010}, register automata in \cite{Howar2012,Cassel2016} and Moore machines with decomposable outputs in \cite{Moerman2019}. In this paper we extend \(L^*\)-algorithm to learn event-driven deterministic finite automaton whose nodes are labelled with matrices. 
    In general, automata learning techniques are employed widely in model learning \cite{Groz2020,Aichernig2018}, model checking \cite{Clarke1999}, automatic verification of networks of processes \cite{Grinchtein2006}, compositional verification \cite{Cobleigh2003,Giannakopoulou2013}, as well as conformance testing of boolean programs \cite{Kumar2006}. In this paper we employ the \(L^*\)-algorithm proposed by Dana Angluin in \cite{Angluin1987} as a primary tool for our learning task.


    \section{Problem statement}
\label{s:prob_stat}
In this section we present the mathematical formulation of the learning problem under consideration. We define the necessary preliminaries, the class of systems we intend to learn, and the assumptions for our learning problem.

\subsection{Notation}
\label{ss:notation}
    \(\R\) will denote the set of real numbers, \(I_{d}\) the \(d\)-dimensional identity matrix and \(I_{d}^{k}\) its \(k\)-th column.
    For a finite set \(A\), its cardinality is denoted by \(\abs{A}\).
    A (finite) sequence over a set \(A\) is denoted by listing elements from \(A\), e.g., \(w = a_1 a_2\cdots a_n\), where \(a_i\in A\), \(i=1,2,\ldots,n\). \(\varepsilon\) denotes an empty sequence. We employ \(\last(w)\) to denote the last element of the sequence \(w\), i.e., \(\last(w) = w_n\).
    Also, $w[i\cdots j]$ represents the sequence $a_i \cdots a_j$.
    Let \(A^*\) denote the set of all finite sequences over \(A\).


\subsection{Switched systems}
In this paper, we study learning algorithms for a subclass of discrete-time linear switched systems, wherein the switching logic is specified by a finite automaton.
We first define a finite automaton and its language.

\label{ss:swsys}
    \begin{defn}
    \label{d:automata}
    \rm{
        An event-deterministic labelled finite automaton (FA) is a tuple \(\D = (\Q,q_{0},\Sigma,\Lambda, \delta, \gamma)\), where \(\Q\) is the set of nodes or (discrete) states, \(q_{0}\in \Q\) is the initial node, \(\Sigma\) is a set of events, \(\Lambda\) is a set of node labels, \(\delta:\Q\times\Sigma\to \Q\) is the node transition function, and \(\gamma:\Q\to\Lambda\) is the node labelling function.
    }
    \end{defn}

In the sequel, we will refer to an event-deterministic labelled finite automaton, as just a finite automaton. The components of a finite automaton will be identified by using subscripts indicating the automaton, such as, $\D$ will refer to the nodes of automaton $\QD$. When the automaton is clear from the context, the subscripts will be dropped.

    \begin{example}
    \label{ex:fa}
    \rm{
        Consider an FA $\D = (\Q,q_{0},\Sigma,\Lambda, \delta, \gamma)$ shown in Figure \ref{fig:DLFA1}. It has \(\Q = \{q_{0},q_{1},q_{2},q_{3}\}\), \(\Sigma = \{e_{1},e_{2}\}\), \(\Lambda = \{\ell_{1},\ell_{2},\ell_{3}\}\), \(\delta(q_{0},e_{1}) = q_{3}\), \(\delta(q_{0},e_{2}) = q_{1}\),
			\(\delta(q_{1},e_{1}) = q_{2}\), \(\delta(q_{1},e_{2}) = q_{0}\),
			\(\delta(q_{2},e_{1}) = q_{1}\), \(\delta(q_{2},e_{2}) = q_{3}\),
             \(\delta(q_{3},e_{1}) = q_{0}\), \(\delta(q_{3},e_{2}) = q_{2}\), and
            \(\gamma(q_{0}) = \ell_{1}\), \(\gamma(q_{1}) = \ell_{2}\), \(\gamma(q_{2}) = \ell_{2}\), \(\gamma(q_{3}) = \ell_{3}\).
	    \begin{figure}[htbp]
    \centering
        \scalebox{0.6}{
        \begin{tikzpicture}[every path/.style={>=latex},base node/.style={draw,circle}]
            \node[base node] (a) at (-2,2)  { $\ell_{1}$ };
            \node[base node] (b) at (0,2)  { $\ell_{2}$ };
            \node[base node] (d) at (-2,-2)  { $\ell_{3}$ };
            \node[base node] (c) at (0,-2)  { $\ell_{2}$ };

            \draw[->] (a) edge (b);
            \draw[->] (b) edge[bend right] (a);
            \draw[->] (b) edge (c);
            \draw[->] (c) edge[bend right] (b);
            \draw[->] (c) edge (d);
            \draw[->] (d) edge[bend right] (c);
            \draw[->] (a) edge (d);
            \draw[->] (d) edge[bend left] (a);

             \node (s) at (1,0.2) {$e_{1}$};
             \node (s) at (-0.2,0.2) {$e_{1}$};
             \node (s) at (-1.8,0.2) {$e_{1}$};
             \node (s) at (-2.9,0.2) {$e_{1}$};

             \node (s) at (-1,1.8) {$e_{2}$};
             \node (s) at (-1,2.6) {$e_{2}$};
             \node (s) at (-1,-2.6) {$e_{2}$};
             \node (s) at (-1,-1.8) {$e_{2}$};

            \draw[->] (-3,2) -- (a);
        \end{tikzpicture}
        }
        \caption{FA for Example \ref{ex:fa}} \label{fig:DLFA1}
    \end{figure}
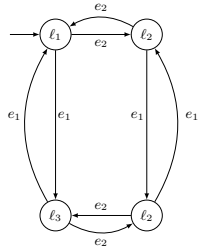
    }	
    \end{example}

Note that the transition function of our automaton is deterministic.
We will refer to a sequence of event, that is, an element of $\Sigma^*$, as an input (word or sequence).
We overload \(\delta\) to also denote the function \(\delta:\Q\times\Sigma^*\to \Q\) that given a state and an input word and outputs the state reached on taking the sequence of transitions corresponding to the input word, and is inductively defined as \(\delta({q},\varepsilon) = {q}\) and \(\delta({q},ua) = \delta(\delta({q},u),a)\) for all \(u\in\Sigma^*\) and \(a\in\Sigma\). Similarly, we overload \(\gamma\) to a function \(\gamma:\Q^*\to\Lambda^*\) given by  \(\gamma({q}_{0} {q}_{1}\ldots {q}_{n}) = \gamma({q}_{0})\gamma({q}_{1})\ldots\gamma({q}_{n})\) for all \({q}\in \Q^*\).

We will define the semantics of an FA as a mapping from input words to corresponding sequence of state labels generated by them. We will refer to this mapping as a "language".
First, we define a run of the FA on a word to be the sequence of nodes generated by reading the word.
    \begin{defn}
    \label{d:run}
    \rm{
        Given \(w = {e}_1 {e}_2\cdots {e}_n\in\Sigma^*\), \emph{run of \(w\) on \(\D\)} is given by
        \begin{align}
        \label{e:run}
            \run_{\D}(w) = {q}_{0} {q}_{1}\cdots {q}_{n}\:\:\text{for any}\:w\in\Sigma^*,
        \end{align}
        where \({q}_{i+1} = \delta({q}_{i},{e}_{i+1})\), for \(i=0,1,\ldots,n-1\).
    }
    \end{defn}

    \begin{defn}
    \label{d:semantics}
    \rm{
        The \emph{language of \(\D\)} is a function \(L_{\D}:\Sigma^*\to\Lambda^*\) given by
        \begin{align}
        \label{e:language}
            L_{\D}(w) = \gamma(\run_\D(w)).
        \end{align}
        }
    \end{defn}

    In the sequel, for learning, we will need the label of the last node reached on reading a word. We will refer to this as the output. This is a generalization of the notion of acceptance of a word by a traditional deterministic finite automaton, where the labels are "final" and "non-final".
    \begin{defn}
    \label{d:output}
    \rm{
        Given \(w = {e}_1 {e}_2\ldots {e}_n\in\Sigma^*\), the \emph{output of \(w\) in \(\D\)} is the label of the last node of \(\run(w)\). More specifically,
        \begin{align}
        \label{e:output}
            \Output_{\D}(w) = \last(L_{\D}(w)).
        \end{align}
        }
    \end{defn}

    \begin{example}
    \label{ex:fa_parameters}
    \rm{
        Consider the FA in Example \ref{ex:fa}. Let \(w = e_1 e_2 e_1 e_2 e_2\). We have \(\run(w) = q_{0} q_{3} q_{2} q_{1} q_{0} q_{1}\), \(L_{\D}(w) = \ell_1 \ell_3 \ell_2 \ell_2 \ell_1 \ell_2\), and \(\Output_{\D}(w) = \ell_2\).
        }
    \end{example}

 We consider switched systems consisting of a finite number of discrete-time dynamical systems, each of which is specified by a matrix $A_i$, with the intended dynamics being $x(t+1) = A_i x(t)$,  and a switching logic specified using a finite automaton. We capture the switched system holistically as a finite automaton with the matrices being the node labels.
    \begin{defn}
    \label{d:swsys}
    \rm{
        A \emph{switched system} is a FA \(\D\), whose set of node labels, \(\Lambda_{\D}\), is an indexed set of matrices of dimension \(d\) represented as \(\Lambda_{\D} = \{A_j\}_{j=1}^{N}\), where \(A_j \in \R^{d\times d}\) for every $j$.
        }
    \end{defn}
    In the sequel, we will occasionally refer to the elements of the set \(\{A_j\}_{j=1}^{N}\) as subsystem matrices.
    An execution of $\D$ from an initial (continuous) state \(x\in\R^{d}\) on an input sequence of events  $w$, denoted $\exec_\D(x, w)$, is the sequence of states reached by applying the dynamics represented by the matrices labelling the nodes in the run of the finite automaton on the input sequence.
    In the sequel, we will need a general definition of executions from a finite number of, say, $k$ initial states, stored as a $d \times k$-dimensional matrix, each of whose columns represents a state. The execution will be a sequence of $d \times k$-dimensional matrices, where the $i$-th column of these matrices represents the execution starting from the $i$-th column of the initial matrix.

    \begin{defn}
    \label{d:swsys_exec}
    \rm{
        An \emph{execution of \(\D\)}, on a state matrix \(X \in\R^{d \times d}\), and a sequence of events, \(w = {e}_1 {e}_2\cdots {e}_n\in\Sigma^*\), is given by
        \begin{align}
        \label{e:swsys_exec}
            \exec_{\D}(X,w) = X_0 X_1 \cdots X_{n+1},
        \end{align}
        where \(X_0 = X\), \(X_{i+1} = \overline{A}_i X_i\), \(i=0,1,\ldots,n\), and \(L_{\D}(w) = \overline{A}_0 \overline{A}_1\ldots \overline{A}_n\).
        }
    \end{defn}

Note that given a state $x \in\R^d$, $\exec_\D(x, w)$ denotes the execution from a $d \times 1$ matrix.
We use states to refer to both elements of $\R^d$, which are continuous states, and nodes in $\Q$, which are discrete states. When there is ambiguity, we will use the prefix "discrete"/"continuous".

    \begin{example}
    \label{ex:swsys}
    \rm{
        Consider a switched system given by the FA \(\D \) described in Example \ref{ex:fa}, with \(\ell_1 = A_1 = \pmat{1.0 & 0.3\\0.7 & 1.2}\), \(\ell_2 = A_2 = \pmat{0.4 & 0.8\\-0.7 & 0.6}\) and \(\ell_3 = A_3 = \pmat{1.2 & 0.7\\1.6 & 0.1}\). Let \(x = \pmat{0.5\\0.5}\) and \(w = e_1 e_2 e_1 e_2 e_2\). Then
        \begin{align*}
            \exec_{\D}(x,w) &= x_0x_1x_2x_3x_4x_5x_6\\
            &= x_0\: A_1x_0\: A_3x_1\: A_2 x_2\: A_2 x_3\: A_1 x_4\: A_2 x_5\\
            &= \pmat{0.5\\0.5}\pmat{0.65\\0.95}\pmat{1.445\\1.135}\pmat{1.486\\-0.3305}\pmat{0.33\\-1.2385}\\
            &\quad\quad\pmat{-0.04155\\-1.2552}\pmat{-1.02078\\-0.724035}.
        \end{align*}
        }
    \end{example}

As before, when the finite automaton or the switched system is clear from the context, we will drop the subscript $\D$ from $\Output$,  $\run$,  $L$ and  $\exec$.

\subsection{Learning problem}
    Our broad objective is to learn a switched system, which is provided as a black box system.  We now formalize our learning problem.
    \begin{prob}
    \label{prob:main}
     Consider a switched system \(\D= (\QD,q_{0,\D},\Sigma,\Lambda_{\D},\delta_{\D},\gamma_{\D})\). Suppose that we know the set of events, \(\Sigma\), and the dimension, \(d\), of the elements of \(\Lambda_{\D}\). In addition, we have access to two oracles that can perform the following tasks:
         \begin{enumerate}[label = (\alph*),leftmargin = *]
            \item IO-generator: Given input \((x,w)\in\R^{d}\times\Sigma^*\), the IO-generator outputs \(\exec_{\D}(x,w)\). Note that we can find $\exec_\D(X, w)$ for any $d \times k$-matrix by $k$ calls to the IO-generator.
            \item Equivalence-checker (counter-example generator): Given a (hypothesis) FA \(\D' = (\Qdash,q'_{0},\Sigma,\Lambda',\delta',\gamma')\) as input, the equivalence checker checks the equivalence of the languages of $\D$ and $\D'$, that is, it outputs if $L_\D = L_{\D'}$. If not, then it also outputs a (counter-example) \(w\in\Sigma^*\) such that \(\Output_{\D}(w)\neq \Output_{\D'}(w)\).
         \end{enumerate}
    Our objective is to design an algorithm that uses the above oracles to output an automaton $\D'$ such that \(L_{\D} = L_{\D'}\).
    \end{prob}

    In the sequel, we will also refer to a call to IO-generator on an input word and a continuous state for obtaining an execution of the black box switched system, as an \emph{observation query}, and the call to the equivalence checker with a hypothesis automaton, an \emph{equivalence query}. Towards solving Problem \ref{prob:main}, we will assume that the matrices \(\{A_j\}_{j=1}^{N}\) are full-rank, and devise a learning algorithm that relies on the principles of Angluin's $L^*$ algorithm.
    Our solution approach broadly consists of the following steps:
    \begin{itemize}[label = \(\circ\), leftmargin = *]
        \item
         We use the IO-generator to compute $\Output_\D(w)$ for a given $w$, thereby reducing the learning problem to that of learning an event-deterministic labelled finite automaton.
         \item
         We extend the \(L^*\)-algorithm for deterministic finite automata (with two labels, namely, final and non-final) to the setting of learning event-deterministic finite automata with potentially multiple labels.
    \end{itemize}

    \section{Switched System Learning Algorithm}
\label{s:res}
    This section contains the details of our solution to Problem \ref{prob:main}. We begin with an algorithm to compute \(\Output_{\D}(w)\) for a given $w$ by making a sequence of IO-generator queries that provide $\exec_\D(x,w')$ as output for a given initial state $x$ and input $w'$.
    Then we provide an algorithm that learns the underlying finite automaton that has access to the equivalence checker and the algorithm for computing $\Output_\D(\cdot)$.

\subsection{Computation of \(\Output_{\D}(w)\)}
\label{ss:res_set1}
The computation of $\Output_\D$ relies on the fact that a matrix $A$ can be uniquely computed given a set of basis vectors and their transformation on the application of $A$, when $A$ is full-rank. Let \(\GM\) be a function that takes as input a matrix $X$ whose columns form a basis, and the transformation of those vectors on a matrix $A$, given by a matrix $X' = AX$, and returns $A$. More precisely, $\GM(X, X')$ takes as input two matrices \(X,X'\in\R^{d\times d}\) whose columns form a basis, and solves the systems of linear equations \(AX = X'\) for \(A\in\R^{d\times d}\), and returns \(A\). Such a matrix can be constructed effectively by solving the system of linear equations, and the uniqueness of the solution is guaranteed by well-known results from linear algebra.

To obtain $\Output_\D(w)$, we need to find two sets of basis vectors, where the second one corresponds to a transformation of the first using the matrix $\last(\gamma_\D(\run_\D(w)))$.
The algorithm is quite straight forward.
Consider $I_d$, a $d \times d$ identity matrix, whose columns form a basis.
Let $X = \last(\Exec_\D(I_d, w[1 \cdots n-1]))$, where $w$ is a sequence of $n$ events.
Note that the columns of matrix $X$ also form a basis, because all the matrices in $\gamma_\D(\run_\D(w))$ are full rank matrices.
Similarly, let $X' = \last(\Exec_\D(I_d, w[1 \cdots n]))$, which again represents a basis.
Moreover, $X' = \Output_\D(w) X$. Hence, $\Output_\D(w)$ is given by $\GM(X, X')$.
This construction of $\Output_\D(w)$ is outlined in Algorithm \ref{algo:output_comp}.

	%

	\begin{algorithm}[htbp]
			\caption{Computation of \(\Output_{\D}(w)\)} \label{algo:output_comp}
		\begin{algorithmic}[1]
			\renewcommand{\algorithmicrequire}{\textbf{Input:}}
			\renewcommand{\algorithmicensure}{\textbf{Output:}}
			
			\REQUIRE The dimension of the subsystems matrices, \(d\) and a sequence of events, \(w \in\Sigma^*\).	
            \ENSURE \(\Output_{\D}(w)\).

            \IF {\(w = \varepsilon\)}
                \STATE Output $\Exec_{\D}(I_d,\varepsilon)$ and terminate.
            \ELSE
                    \STATE Set \(X = \last(\Exec_{\D}(I_d,w[1\cdots n-1]))\)\\
                    \STATE Set \(X' = \last(\Exec_{\D}(I_d,w[1\cdots n]))\)\\
                    \STATE Output $\GM(X, X')$ and terminate.
            \ENDIF
        \end{algorithmic}
	\end{algorithm}

    \begin{lemma}
     \label{lem:auxres1}
         Given \(w\in\Sigma^*\), Algorithm \ref{algo:output_comp} outputs \(\Output_{\D}(w)\).
    \end{lemma}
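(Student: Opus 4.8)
The plan is a direct unfolding of the definitions, splitting on whether $w$ is empty and tracking precisely which subsystem matrices occur in the two executions that Algorithm \ref{algo:output_comp} computes. For the base case $w=\varepsilon$ we have $n=0$ and $\run_{\D}(\varepsilon)=q_{0,\D}$, so by Definitions \ref{d:semantics} and \ref{d:output} both $L_{\D}(\varepsilon)$ and $\Output_{\D}(\varepsilon)$ equal $\gamma_{\D}(q_{0,\D})$; and by Definition \ref{d:swsys_exec}, $\Exec_{\D}(I_d,\varepsilon)=X_0 X_1$ with $X_0=I_d$ and $X_1=\gamma_{\D}(q_{0,\D})I_d=\gamma_{\D}(q_{0,\D})$, so the last matrix of this execution is exactly $\Output_{\D}(\varepsilon)$, which is what line~2 returns (reading it as returning $\last$ of the execution; as literally written it returns the length-two sequence $X_0 X_1$, a one-character fix).

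For the main case, write $w=e_1\cdots e_n$ with $n\ge 1$ and let $\run_{\D}(w)=q_0 q_1\cdots q_n$, where $q_0=q_{0,\D}$ and $q_{i+1}=\delta_{\D}(q_i,e_{i+1})$ as in Definition \ref{d:run}. A one-line induction on prefix length shows $\run_{\D}(w[1\cdots n-1])=q_0 q_1\cdots q_{n-1}$, i.e.\ it is the length-$n$ prefix of $\run_{\D}(w)$; hence by Definition \ref{d:semantics} we get $L_{\D}(w[1\cdots n-1])=\gamma_{\D}(q_0)\cdots\gamma_{\D}(q_{n-1})$ and $L_{\D}(w)=\gamma_{\D}(q_0)\cdots\gamma_{\D}(q_n)$. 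Substituting these into Definition \ref{d:swsys_exec} with initial matrix $I_d$ and composing the successive left-multiplications gives
\[
  X=\last\big(\Exec_{\D}(I_d,w[1\cdots n-1])\big)=\gamma_{\D}(q_{n-1})\,\gamma_{\D}(q_{n-2})\cdots\gamma_{\D}(q_0),
\]
\[
  X'=\last\big(\Exec_{\D}(I_d,w[1\cdots n])\big)=\gamma_{\D}(q_n)\,\gamma_{\D}(q_{n-1})\cdots\gamma_{\D}(q_0)=\gamma_{\D}(q_n)\,X.
\]
By the standing assumption that the subsystem matrices are full-rank, $X$ is a product of invertible $d\times d$ matrices and hence invertible, so its columns form a basis and the call $\GM(X,X')$ is well-posed. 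By the specification of $\GM$ it returns the unique $A\in\R^{d\times d}$ with $AX=X'$, namely $A=X'X^{-1}=\gamma_{\D}(q_n)=\last(L_{\D}(w))=\Output_{\D}(w)$ (Definition \ref{d:output}), and this is exactly what line~6 outputs.

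The only step that needs genuine care is the bookkeeping of the index shift among $\run_{\D}$, $L_{\D}$ and $\Exec_{\D}$: one has to verify that dropping the final event shortens both the run and the label word by exactly one symbol (their common prefix), so that the two matrix products defining $X$ and $X'$ differ precisely by the single leftmost factor $\gamma_{\D}(q_n)$. Once the identity $X'=\gamma_{\D}(q_n)X$ is established, the remainder is an immediate appeal to the full-rank hypothesis (to know $X$ is a change-of-basis matrix) and to the defining property of $\GM$.
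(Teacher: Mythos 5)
Your proof is correct and follows essentially the same argument the paper gives (informally, in the prose preceding Algorithm \ref{algo:output_comp}, since the lemma is stated there without a formal proof): the key identity $X'=\Output_{\D}(w)\,X$, the invertibility of $X$ from the full-rank hypothesis, and the uniqueness of the solution returned by $\GM$. Your explicit treatment of the base case (including the observation that line~2 should return $\last(\Exec_{\D}(I_d,\varepsilon))$ rather than the whole execution) and the index bookkeeping are careful additions that the paper leaves implicit.
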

    We now present an example to demonstrate Algorithm \ref{algo:output_comp}.

    \begin{example}
    \label{ex:output_comp}
    \rm{
        Recall Example \ref{ex:swsys}. We apply Algorithm \ref{algo:output_comp} to compute \(\Output_{\D}(w)\) for \(w = e_1 e_2\). The following steps are carried out:
        \begin{enumerate}[label = \arabic*), leftmargin = *]
           \item i) Input \(\biggl(\pmat{1\\0},e_1\biggr)\) to the IO-generator and observe \(\exec_{\D}\biggl(\pmat{1\\0},e_1\biggr)=\pmat{1\\0},\pmat{1.0\\0.7},\pmat{1.69\\1.67}\).\\
           ii) Input \(\biggl(\pmat{0\\1},e_1\biggr)\) to the IO-generator and observe \(\exec_{\D}\biggl(\pmat{0\\1},e_1\biggr)=\pmat{0\\1},\pmat{0.3\\1.2},\pmat{1.2\\0.6}\).
           \item i) Input \(\biggl(\pmat{1\\0},e_1e_2\biggr)\) to the IO-generator and observe \(\exec_{\D}\biggl(\pmat{1\\0},e_1e_2\biggr)=\pmat{1\\0},\pmat{1.0\\0.7}\),\\\(\pmat{1.69\\1.67}\),\(\pmat{2.012\\-0.181}\).\\
           ii) Input \(\biggl(\pmat{0\\1},e_1e_2\biggr)\) to the IO-generator and observe \(\exec_{\D}\biggl(\pmat{0\\1},e_1e_2\biggr)=\pmat{0\\1},\pmat{0.3\\1.2}\),\\\(\pmat{1.2\\0.6}\),\(\pmat{0.96\\-0.48}\).
           \item We have i) \(X = \exec_{\D}(I_d,e_1) = \pmat{1.69 & 1.2\\1.67 & 0.6}\), and
           ii) \(X' = \exec_{\D}(I_d,e_1 e_2) = \pmat{2.012 & 0.96\\-0.181 & -0.48}\).
           \item We solve the systems of linear equations \(X' = AX\) for \(A = \pmat{a_{11} & a_{12}\\a_{21} & a_{22}}\),
                and obtain \(a_{11} = 0.4\), \(a_{21} = -0.7\), \(a_{12} = 0.8\) and \(a_{22} = 0.6\).
        \end{enumerate}
        }
    \end{example}

    Armed with Algorithm \ref{algo:output_comp}, we proceed towards extending \(L^*\)-algorithm from the learning literature to the learning of a FA \(\D^*\) that accepts the language of \(\D\).

\subsection{Learning algorithm}
\label{ss:res_set2}
Let us fix an unknown finite automaton $\D$, for which we know the set of events $\Sigma$ and the dimension of the matrices in $\Lambda_\D$. Our objective is to output a finite automaton $\D'$ such that $L_\D = L_{\D'}$. We have access to an algorithm for computing $\Output_\D(w)$ for any given input $w$, from Algorithm \ref{algo:output_comp}.

    The broad framework of our learning approach based on Angluin's $L^*$ algorithm is as follows: at each step of the learning algorithm, the Learner maintains two sets of input words (sequences over $\Sigma$): $\Q$, a set of access words, and $\T$, a set of test words.
    Intuitively, the set $\Q$ represents a set of input words that reach distinct states in any minimal finite automaton $\D^*$ representing the language to be learnt.
    Note that for any two distinct states of $\D^*$, there is an input word, that will distinguish the behaviors from those states.
    $\T$ is a finite set of input words that can distinguish any pair of states in $\Q$.
    This property is referred to as $(\Q, \T)$ being $\D$-separable.
    The algorithm consists of judiciously expanding $\Q$ and when required $\T$, so that $(\Q, \T)$ separability is maintained.
    In each step, a hypothesis automaton is constructed from $\Q$ by possibly adding states to "close" the automaton, that is, to ensure that there is a next state on every event from every state. The language of the closed automaton is compared with $\D$ using an equivalence query, and a counter-example if returned, is used to identify a state that has not been captured by the set $\Q$ and added. The process is repeated until a finite automaton which passes the equivalence query is found.


First, we define when two input words are equivalent with respect to a set of test words $\T$.
    \begin{defn}
    \label{d:T-equivalence}
    \rm{
        Given a set \(T \subseteq \Sigma^*\), and two words \(u, v\in\Sigma^*\), we say that $u, v$ are \emph{\(\T\)-equivalent} with respect to \(\D\), denoted by \(u\equiv_{\T}^{\D}v\), if
        \begin{align}
        \label{e:T-equiv}
            \Output_{\D}(uw) = \Output_{\D}(vw)\:\:\text{for all}\:w\in \T.
        \end{align}
        }
    \end{defn}
Given a finite $\T$ and input words $u, v$, we can algorithmically check if $u, v$ are $\T$-equivalent, by iterating over words $w \in \T$ and using Algorithm \ref{algo:output_comp} to check if $\Output_{\D}(uw) = \Output_{\D}(vw)$.
Note that if $u, v$ are not $\T$-equivalent, then some word $w$ from $\T$ distinguishes them, in terms of the label of the last state reached after reading $w$ from the states reached after reading $u$ and $v$, respectively.
This leads us to the notion of separability, which guarantees that the states reached by words in a set $\Q$ of access strings are distinct, using a finite set of test strings $\T$ that witness the distinguishability.
    \begin{defn}
    \label{d:separability}
    \rm{
        The pair \((\Q,\T)\) is called \emph{\(\D\)-separable}, if no two distinct words in \(\Q\) are \(\T\)-equivalent with respect to \(\D\).
        }
    \end{defn}
    Again, given that we can check if $u, v$ are $\T$-equivalent for a finite $\T$, we can also algorithmically check if $(\Q, \T)$ $\D$-separable, when $\Q$ is also finite.
    Given a set of access strings $\Q$ that reach distinct states of $\D$, we want to hypothesize a finite automaton that captures the language of $\D$. We need to identify the states and transitions of this automaton. We can consider $\Q$ to represent the  states, with the interpretation that they represent the states reached in $\D$ when given themselves as input.
    In order to define the edge, for every $q \in \Q$ and $e \in \Sigma$, we need to identify a word in $\Q$ that corresponds to $qe$. We can search for a word in $\Q$, that is $\T$-equivalent to $qe$. Note that there is at most one such word in $\Q$ if $(\Q, \T)$ is separable. However, no such word might exist. Hence, we add those words to $\Q$, until $\Q$ is closed with respect to the "next step" operation.
    Next, we formalize the notion of closure, and the hypothesis automaton constructed when a closed pair $(\Q, \T)$ is given.

    \begin{defn}
    \label{d:closure}
    \rm{
        The pair \((\Q,\T)\) is called \emph{\(\D\)-closed}, if for every \(q\in \Q\) and \(e\in\Sigma\), there exists \(q'\in \Q\) such that \(qe\equiv_{\T}^{\D}q'\).
        }
    \end{defn}

    \begin{defn}
    \label{d:hypo_automat}
    \rm{
        Consider a \(\D\)-separable and \(\D\)-closed pair \((\Q,\T)\). A hypothesis automaton for $(\Q, \T)$ is a finite automaton \(\D' = (\Qdash,q'_{0},\Sigma,\Lambda',\delta',\gamma')\), where:
        \begin{itemize}[label = \(\circ\), leftmargin = *]
            \item \(\Qdash=\Q\) with the empty sequence of events, \(\varepsilon\), being the initial node, that is, $q'_0 = \varepsilon$;
            \item $\Lambda' = \{\Output_\D(q) \,|\, q \in \Q\}$;
            \item for any $q, e$, \(\delta(q,e) = q'\), where $q' \in \Q$ is such that \(qe\equiv_{\T}^{\D}q'\);
            \item for any $q$, \(\gamma(q) = \Output_{\D}(q)\).
        \end{itemize}
    }
    \end{defn}
Note that our definition of hypothesis automaton is well-defined, since, in the definition of $\delta'$, there is at most one $q'\in \Q$ satisfying \(qe\equiv_{\T}^{\D}q'\), because of the separability property of $(\Q, \T)$.
Also, checking for whether a pair of finite sets $(\Q, \T)$ is closed and the construction of the hypothesis automaton for $(\Q, \T)$ are computable.

Our learning algorithm is summarized in Algorithm \ref{algo:semantics}. The details and correctness of the algorithm depend on the following results.
    \begin{algorithm}[htbp]
			\caption{Learning a minimal FA whose semantics is \(L_{\D}\)} \label{algo:semantics}
		\begin{algorithmic}[1]
			\renewcommand{\algorithmicrequire}{\textbf{Input:}}
			\renewcommand{\algorithmicensure}{\textbf{Output:}}
			
			\REQUIRE The set of events, \(\Sigma\) and the dimension of the subsystems, \(d\), Algorithm for computing $\Output_\D$ and Counter-example generator for the language $L_\D$
            \ENSURE A FA \(\D'\) whose language is \(L_{\D}\)

            \STATE Set \(\Q = \T = \{\varepsilon\}\).
            \STATE \label{line:2} Apply Lemma \ref{lem:auxres4a} to find \(\tilde{\Q} \supseteq \Q\) such that \((\tilde{\Q},\T)\) is \(\D\)-separable and \(\D\)-closed.
            \STATE Set \(\Q = \tilde{Q}\)
            \STATE Construct a hypothesis automaton, \(\D'\) for the pair \((\Q,\T)\)
            \STATE Check equivalence of $\D'$ and $\D$
            \IF {a counterexample \(w \in\Sigma^*\) is returned}
                \STATE Apply Lemma \ref{lem:auxres4} to expand \(\Q\) and \(\T\) towards obtaining a \(\D\)-separable pair \((\tilde{\Q},\tilde{\T})\)
                \STATE Set \(\Q = \tilde{Q}\) and \(\T=\tilde{T}\)
                \STATE Go to Line \ref{line:2}
            \ELSE
                \STATE Output \(\D'\) and terminate.
            \ENDIF
		\end{algorithmic}
	\end{algorithm}

    First, we show that there is an upper-bound on the size of $\Q$ for any $(\Q, \T)$ pair that is $\D$-separable. Intuitively, since, each access string in $\Q$, necessarily reaches a different state in any minimal finite automaton for $L_\D$, due to the fact that some string (from $\T$) distinguishes it from any other string in $\Q$, the size of $\Q$ can be at most the number of states of a minimal finite automaton, which is less than $N_{\D}$, the number of states of $\D$.
    \begin{lemma}
    \label{lem:auxres2}
         If the pair \((\Q,\T)\) is \(\D\)-separable, then \(\abs{\Q}\) is at most \(N_{\D}\).
    \end{lemma}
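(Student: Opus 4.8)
The plan is to establish an injection from the set $\Q$ into the state set of a minimal finite automaton $\D^*$ whose language is $L_\D$, and then bound the number of states of $\D^*$ by $N_\D$. First I would invoke (or set up) the existence of a minimal finite automaton $\D^* = (\Q^*, q_0^*, \Sigma, \Lambda^*, \delta^*, \gamma^*)$ with $L_{\D^*} = L_\D$; since $\D$ itself realizes the language $L_\D$ and has $N_\D$ states, minimality gives $|\Q^*| \le N_\D$. The core of the argument is to show the map $\phi : \Q \to \Q^*$ defined by $\phi(q) = \delta^*(q_0^*, q)$ (the state of $\D^*$ reached by reading the access word $q$) is injective.

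For injectivity, suppose $q_1, q_2 \in \Q$ are distinct but $\phi(q_1) = \phi(q_2)$, i.e. $\delta^*(q_0^*, q_1) = \delta^*(q_0^*, q_2)$. Then for every $w \in \Sigma^*$, the runs of $q_1 w$ and $q_2 w$ on $\D^*$ pass through the same state after the common suffix is read, so $\delta^*(q_0^*, q_1 w) = \delta^*(q_0^*, q_2 w)$, and hence $L_{\D^*}(q_1 w)$ and $L_{\D^*}(q_2 w)$ have the same last label: $\Output_{\D^*}(q_1 w) = \Output_{\D^*}(q_2 w)$. Since $L_{\D^*} = L_\D$, the same holds for $\D$, so $\Output_{\D}(q_1 w) = \Output_{\D}(q_2 w)$ for all $w$, and in particular for all $w \in \T$. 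By Definition \ref{d:T-equivalence} this means $q_1 \equiv_\T^\D q_2$, contradicting $\D$-separability of $(\Q, \T)$ (Definition \ref{d:separability}), which forbids distinct words in $\Q$ from being $\T$-equivalent. Hence $\phi$ is injective, so $|\Q| \le |\Q^*| \le N_\D$.

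The only step requiring care is the claim that such a minimal $\D^*$ exists and that its state count is the right quantity to compare against; this is essentially the Myhill--Nerode picture adapted to the multi-label output setting, where one takes the quotient of $\Sigma^*$ by the relation $u \sim v$ iff $\Output_\D(uw) = \Output_\D(vw)$ for all $w \in \Sigma^*$, and observes this relation is right-invariant and has index at most $N_\D$ because $\D$ induces a refinement of it (two input words reaching the same node of $\D$ are related). I expect the main obstacle, if any, to be purely expository: making precise that $L_\D$ (a function $\Sigma^* \to \Lambda^*$) is determined by $\Output_\D$ together with prefix-closure, so that equivalence of outputs on all extensions genuinely witnesses state-equivalence in a minimal realization. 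Once that is pinned down, the injectivity argument above is routine.
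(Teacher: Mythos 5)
Your proof is correct and follows essentially the same route as the paper's: both compare $\Q$ against the state set of a (minimal) automaton for $L_\D$ and use the fact that two access words reaching the same state have equal outputs under every suffix, hence would violate $\D$-separability; your injection $\phi$ is just the contrapositive packaging of the paper's pigeonhole argument. Your closing worry about constructing a minimal $\D^*$ is unnecessary, since the argument only needs \emph{some} automaton realizing $L_\D$ with at most $N_\D$ states, and $\D$ itself serves.
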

    \begin{proof}
    Let $\D^* = (\Q^*, q_{0}^*, \Sigma, \Lambda, \delta^*, \gamma^*)$ be any minimal FA for $L_D$. Note that $\D^*$ has at most with $N_\D$ states. Suppose \(\abs{\Q}\) is greater than $N_\D$. Then, by pigeonhole principle \cite{papa_optimization}, there are two words $u, v \in \Q$ which access the same state of $\D^*$, that is, \(\delta^*(q_{0}^*,u) = \delta^*(q_{0}^*,v)\).  Then, \(\delta^*(q_{0}^*,uw) = \delta^*(q_{0}^*,vw)\) for any word $w$. This implies that $L_{\D^*}(uw) = L{\D^*}(vw)$, which implies that $L_{\D}(uw) = L{\D}(vw)$.
    Hence, we have  $\Output_\D(uw) = \Output_\D(vw)$. Since, that last equivalence holds for any word $w$, it also holds for any $w$ in $\T$, which implies that $u$ and $v$ are  $\T$-equivalent with respect to $\D$. This contradicts the $\D$-separabilty of $(\Q, \T)$, which requires that for any $u, v \in \Q$, $u, v$ are not $\T$-equivalent.
\end{proof}

The next result states that if $(\Q, \T)$ is not closed, then $\Q$ can be expanded, while keeping $\T$ and the $\D$-separability of $(\Q, \T)$ intact.
Note, however, that from Lemma \ref{lem:auxres2}, there is an upper bound on the size of $\Q$, so, the next Lemma also implies that by expanding $\Q$ at most $N_\D$ times, we can obtain a pair $(\Q, \T)$, that is both closed and separable. Also, the expansion at each step is computable. Line \ref{line:2} of Algorithm \ref{algo:semantics} uses this to compute a closed and separable pair $(\Q, \T)$.
    \begin{lemma}
    \label{lem:auxres3}
         If the pair \((\Q,\T)\) is \(\D\)-separable but not \(\D\)-closed, then there is a  \(q\in \Q\) and \(e\in\Sigma\) such that  \((\Q\cup\{qe\},\T)\) is \(\D\)-separable.
    \end{lemma}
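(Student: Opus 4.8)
The plan is to extract, from the failure of $\D$-closedness, a specific pair $(q,e)$ and show that $qe$ is exactly the word one can safely add to $\Q$. Concretely, since $(\Q,\T)$ is not $\D$-closed, Definition~\ref{d:closure} supplies a witness: there exist $q\in\Q$ and $e\in\Sigma$ such that for every $q'\in\Q$ we have $qe\not\equiv_{\T}^{\D}q'$. I claim that this $qe$ works, i.e. $(\Q\cup\{qe\},\T)$ is $\D$-separable.

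Before checking separability I would record a small but crucial observation: $qe\notin\Q$. Indeed, $\equiv_{\T}^{\D}$ is reflexive, since it is defined by an equality of $\Output_{\D}$ values over all $w\in\T$; hence if $qe$ were an element of $\Q$, it would serve as its own witness $q'=qe$ in Definition~\ref{d:closure}, contradicting the choice of $(q,e)$. So $\Q\cup\{qe\}$ is a strict superset of $\Q$, and every pair of distinct words in it falls into one of two cases.

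To finish, I would carry out the two-case check of $\D$-separability of $(\Q\cup\{qe\},\T)$ (Definition~\ref{d:separability}). Take distinct $u,v\in\Q\cup\{qe\}$. If both $u,v\in\Q$, then they are not $\T$-equivalent because $(\Q,\T)$ is already $\D$-separable. Otherwise exactly one of them equals $qe$ — say $u=qe$ and $v\in\Q$, where necessarily $v\neq qe$ by the observation above — and then $u\not\equiv_{\T}^{\D}v$ is precisely one of the non-equivalences furnished by the witness $(q,e)$. Thus no two distinct words in $\Q\cup\{qe\}$ are $\T$-equivalent, which is exactly $\D$-separability.

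There is no real obstacle here: the statement is essentially a bookkeeping consequence of unwinding the definitions, the only subtlety being to notice that a witness of non-closedness at $(q,e)$ simultaneously guarantees that $qe$ is new and that it is inequivalent to every current access word, which is precisely what separability after the addition demands. The content that makes the lemma useful lies elsewhere: combined with the bound $\abs{\Q}\le N_{\D}$ from Lemma~\ref{lem:auxres2}, it shows that the closure loop in Line~\ref{line:2} of Algorithm~\ref{algo:semantics} terminates after at most $N_{\D}$ such additions, producing a pair that is both $\D$-separable and $\D$-closed.
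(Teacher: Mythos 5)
Your proof is correct and follows exactly the paper's argument: the paper's one-line proof likewise takes the witness $(q,e)$ of non-closedness and notes that adding $qe$ preserves separability. You have merely spelled out the details (reflexivity implying $qe\notin\Q$, and the two-case check) that the paper leaves implicit.
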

    \begin{proof}
        Since the pair \((\Q,\T)\) is not \(\D\)-closed, we have that there exists \(q\in \Q\) and \(e\in\Sigma\) such that \(qe\) is not \(\T\)-equivalent to \(q'\) for any \(q'\in \Q\). Hence, adding $qe$ to $\Q$ preserves separability.
    \end{proof}

   \begin{lemma}
   \label{lem:auxres4a}
        For every \(\D\)-separable pair \((\Q,\T)\), we can compute a \(\D\)-closed and \(\D\)-separable pair \((\tilde{Q},\T)\), where $\Q \subseteq \tilde{Q}$, in time at most $O(N_\D)$
    \end{lemma}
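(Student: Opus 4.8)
The plan is to iterate the single-step expansion furnished by Lemma \ref{lem:auxres3} until closure is reached, using the size bound of Lemma \ref{lem:auxres2} to control the number of iterations and guarantee termination.

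Concretely, I would proceed as follows. Initialize $\tilde{Q} := \Q$, and repeat the following loop. First, test whether $(\tilde{Q},\T)$ is $\D$-closed; this is decidable, since it amounts to checking, for each of the finitely many pairs $(q,e)\in\tilde{Q}\times\Sigma$, whether $qe\equiv_{\T}^{\D}q'$ for some $q'\in\tilde{Q}$, and each such $\T$-equivalence test reduces to finitely many calls to Algorithm \ref{algo:output_comp} on the words $\{qew, q'w \mid w\in\T\}$. If $(\tilde{Q},\T)$ is $\D$-closed, exit the loop and output $(\tilde{Q},\T)$. Otherwise $(\tilde{Q},\T)$ is $\D$-separable (a loop invariant; see below) but not $\D$-closed, so Lemma \ref{lem:auxres3} applies and yields some $q\in\tilde{Q}$, $e\in\Sigma$ with $(\tilde{Q}\cup\{qe\},\T)$ still $\D$-separable; such a witness $qe$ is exposed by the very search used to test closure. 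Set $\tilde{Q} := \tilde{Q}\cup\{qe\}$ and repeat.

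For correctness, observe that $\D$-separability of $(\tilde{Q},\T)$ is preserved throughout: it holds at initialization by hypothesis, and each iteration preserves it by Lemma \ref{lem:auxres3}. Hence, upon exit, $(\tilde{Q},\T)$ is simultaneously $\D$-separable (invariant) and $\D$-closed (exit condition), and $\Q\subseteq\tilde{Q}$ since words are only ever added. For termination and the complexity bound, note that whenever a word $qe$ is added, Lemma \ref{lem:auxres3} guarantees that $qe$ is not $\T$-equivalent to any element of $\tilde{Q}$; since every word is trivially $\T$-equivalent to itself, we get $qe\notin\tilde{Q}$, so $\abs{\tilde{Q}}$ strictly increases by exactly one at each iteration. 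By Lemma \ref{lem:auxres2}, $\D$-separability forces $\abs{\tilde{Q}}\le N_{\D}$ at all times, so the loop runs at most $N_{\D}-\abs{\Q}\le N_{\D}$ times before closure is attained; thus the number of expansion rounds is $O(N_{\D})$, each round being a finite computation via Algorithm \ref{algo:output_comp}.

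I do not anticipate a real obstacle: this lemma is essentially a repackaging of Lemmas \ref{lem:auxres2} and \ref{lem:auxres3} into a terminating loop. The one point deserving care is the verification that each expansion \emph{strictly} enlarges $\tilde{Q}$ --- without this the $O(N_{\D})$ bound would fail --- and this is precisely where one invokes that the witness $qe$ returned by Lemma \ref{lem:auxres3} is distinguishable (by some word of $\T$) from every word currently in $\tilde{Q}$, hence is genuinely new.
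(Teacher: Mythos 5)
Your proposal is correct and follows essentially the same route as the paper's proof: iterate the single-step expansion of Lemma \ref{lem:auxres3} until closure is attained, with Lemma \ref{lem:auxres2} bounding the number of iterations by $N_\D$. Your write-up is in fact somewhat more careful than the paper's, since you make explicit both the separability loop invariant and the observation that each added word $qe$ is genuinely new (because it is not $\T$-equivalent to any current element, in particular not to itself), which is what secures the strict growth and hence the $O(N_\D)$ bound.
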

    \begin{proof}
        If the pair \((\Q,\T)\) is not \(\D\)-closed, then using Lemma \label{lem:auxres3} we can effectively find \(q\in \Q\) and \(e\in\Sigma\), by iterating over the elements of finite sets $\Q$ and $\Sigma$, such that \((\Q\cup\{qe\},\T)\) is \(\D\)-separable, which can be effectively checked.
        If the resulting set is not \(\D\)-closed, we can iteratively expand it until it is \(\D\)-closed. Note that by Lemma \ref{lem:auxres2}, there is a bound on the number of elements that can be added, so we will obtain a \(\D\)-closed and \(\D\)-separable set in at most $N_{\D}$ iterations.
    \end{proof}

    Next, we present the details of the algorithm for expanding $(\Q, \T)$ if the hypothesis automaton is incorrect. We will use a counter-example returned by the equivalence checker  to expand the pair $(\Q, \T)$ such that separability is still maintained. This will be again followed by a closure operation to obtain the next hypothesis automaton, and the loop will continue until a hypothesis automaton whose language is that of $\D$ is found.

    \begin{defn}
    \label{d:counter-example}
    \rm{
        A counter-example for \(\D'\) with respect to $\D$ is an input word \(w\in\Sigma^*\) for which the languages of the two automata have different outputs, that is,  \(\Output_{\D}(w)\neq\Output_{\D'}(w)\).
    }
    \end{defn}

    \begin{lemma}
    \label{lem:auxres4}
        Suppose that the pair \((\Q,\T)\) is \(\D\)-separable and \(\D\)-closed, and \(\D'\) be the corresponding hypothesis FA. Given a counter-example \(w\)  for $\D'$ with respect to $\D$, we can compute \(q\in\Sigma^*\setminus \Q\) and \(t\in\Sigma^*\) such that the pair \((\Q\cup\{q\},\T\cup\{t\})\) is \(\D\)-separable using at most $O(\log(\abs{w}))$ IO-generator queries.
    \end{lemma}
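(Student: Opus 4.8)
The plan is to process the counter-example by a binary search over its prefixes (the Rivest--Schapire strategy), which is precisely what yields the logarithmic query bound; a naive scan of $w$ position by position would instead cost $O(\abs{w})$ queries. Write $w = e_1 e_2 \cdots e_n$. For $j = 0, 1, \ldots, n$, let $q_j \in \Q$ be the access word labelling the node of the hypothesis automaton $\D'$ reached after reading $w[1 \cdots j]$; that is, $q_0 = \varepsilon$ and $q_j = \delta'(q_{j-1}, e_j)$. Crucially, the $q_j$ are obtained by simulating the already-known automaton $\D'$ and require \emph{no} IO-generator calls. Define the hybrid words $\phi_j := q_j \cdot w[j+1 \cdots n]$ and the values $v_j := \Output_{\D}(\phi_j)$. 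By the construction of $\D'$ we have $q_{j-1} e_j \equiv_{\T}^{\D} q_j$, so passing from $\phi_{j-1}$ to $\phi_j$ just folds one more transition of $w$ into the access-word prefix.

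I would first locate a \emph{breakpoint}, i.e., an index $i$ with $v_i \ne v_{i+1}$. The endpoints are pinned down by: $\phi_0 = w$, so $v_0 = \Output_{\D}(w)$; and $\phi_n = q_n$, so $v_n = \Output_{\D}(q_n) = \gamma'(q_n) = \Output_{\D'}(w)$, using the definition of the labelling and transition functions of the hypothesis automaton. Since $w$ is a counter-example, $v_0 \ne v_n$, so some $i \in \{0, \ldots, n-1\}$ has $v_i \ne v_{i+1}$. Moreover the breakpoint is never at $i = n-1$: as $\varepsilon \in \T$ (the algorithm keeps $\varepsilon$ in $\T$ throughout) and $q_{n-1} e_n \equiv_{\T}^{\D} q_n$, we get $v_{n-1} = \Output_{\D}(q_{n-1} e_n) = \Output_{\D}(q_n) = v_n$. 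A standard binary search on $\{0, \ldots, n\}$ that maintains an interval with distinct $v$-values at its ends finds such an $i$ after $O(\log n)$ evaluations of $\Output_{\D}(\cdot)$; by Algorithm \ref{algo:output_comp} each evaluation uses a constant (in $\abs{w}$) number of IO-generator queries, giving the claimed bound $O(\log \abs{w})$.

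Given the breakpoint $i$, I would set $q := q_i e_{i+1}$ and $t := w[i+2 \cdots n]$ (so $t \ne \varepsilon$, by the previous paragraph). Unfolding definitions, $v_i = \Output_{\D}(q\, t)$ and $v_{i+1} = \Output_{\D}(q_{i+1}\, t)$, so $t$ distinguishes $q$ from $q_{i+1}$; on the other hand $q = q_i e_{i+1} \equiv_{\T}^{\D} q_{i+1}$ by the construction of $\D'$, so no word of $\T$ distinguishes $q$ from $q_{i+1}$ --- in particular $t \notin \T$, and $q \notin \Q$ (if $q \in \Q$, separability of $(\Q, \T)$ together with $q \equiv_{\T}^{\D} q_{i+1}$ would force $q = q_{i+1}$, contradicting $v_i \ne v_{i+1}$). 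It remains to check that $(\Q \cup \{q\}, \T \cup \{t\})$ is $\D$-separable. Two distinct words of $\Q$ are not $\T$-equivalent, hence not $(\T \cup \{t\})$-equivalent. The pair $q, q_{i+1}$ is separated by $t$. For $q$ against any other $q' \in \Q$: since $q \equiv_{\T}^{\D} q_{i+1}$ and $(\Q, \T)$ is $\D$-separable, some $s \in \T$ has $\Output_{\D}(q_{i+1}\, s) \ne \Output_{\D}(q'\, s)$, whence $\Output_{\D}(q\, s) = \Output_{\D}(q_{i+1}\, s) \ne \Output_{\D}(q'\, s)$, so $s \in \T \cup \{t\}$ separates $q$ from $q'$.

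I expect the last step --- verifying separability of $(\Q \cup \{q\}, \T \cup \{t\})$ --- to be the crux, and specifically the subcase of distinguishing the newly added word $q$ from \emph{every} old access word rather than only from the word $q_{i+1}$ it shadows; the resolution is to route the argument through the $\T$-equivalence $q \equiv_{\T}^{\D} q_{i+1}$ and the prior separability of $(\Q, \T)$. A smaller amount of care is needed to make rigorous the endpoint identity $v_n = \Output_{\D'}(w)$ (peeling $\Output_{\D'}$, $\gamma'$, $\delta'$ apart via the hypothesis-automaton definition), to confirm that the breakpoint cannot occur at $i = n-1$ so that $t$ is a genuinely new, nonempty test word, and to account precisely for the query count --- namely that computing the $q_j$ costs nothing and that only $O(\log \abs{w})$ hybrids are ever evaluated via Algorithm \ref{algo:output_comp}.
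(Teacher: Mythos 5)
Your proposal is correct and follows essentially the same route as the paper's proof: the same hybrid words $q_j\,w[j+1\cdots n]$ with the binary search for a breakpoint $i$ where the output changes, the same choice of new access word $q_i e_{i+1}$ and test word $w[i+2\cdots n]$, and the same separability argument routed through $q_i e_{i+1}\equiv_{\T}^{\D}q_{i+1}$. The only addition is your observation that the breakpoint cannot occur at $i=n-1$ when $\varepsilon\in\T$, a detail the paper leaves implicit but which is consistent with its argument.
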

    \begin{proof}
         Let $\D' = (\Q', q'_0, \Sigma, \Lambda, \delta', \gamma')$ be the hypothesis automaton constructed from $(\Q, \T)$ using Definition \ref{d:hypo_automat}. Let us define $q'_i$ to be the state reached in $\D'$ after reading $w[1 \cdots i]$, that is, $q'_i = \delta'(q'_0, w[1 \cdots i])$.
         Consider the sequence $O_i = \Output_\D(q'_iw[i+1 \cdots n])$, for $i = 0, \cdots, n$.
         Note that $O_0 = \Output_\D(q'_0 w[1 \cdots n]) = \Output_\D(\epsilon w) = \Output_\D(w)$.
         $O_n = \Output_\D(q'_n w[n+1 \cdots n]) = \Output_\D(q'_n \epsilon) = \Output_\D(q'_n) = \gamma'(q'_n)$ (from Definition \ref{d:hypo_automat}).
         Hence, $O_n = \gamma'(q'_n) = \gamma'(\delta'(q'_0, w[1 \cdots n])) = \gamma'(\delta'(q'_0, w)) = \last(\gamma'(\run_{\D'}(q'_0, w))) = \Output_{\D'}(w)$.
         We obtain that $O_0 \not= O_n$, since, $O_0 = \Output_\D(w)$ and $O_n = \Output_{\D'}(w)$, and $\Output_\D(w) \not= \Output_{\D'}(w)$, since, $w$ is a counter-example of $\D'$ with respect to $\D$.

         We can compute any $O_i$ with a constant number of IO-generator queries, as given by Algorithm \ref{algo:output_comp}.
         We know that $O_0 \not= O_n$, hence, we can perform a binary search to find an $i$ such that $O_i \not= O_{i+1}$. More precisely, given a range $[j \cdots k]$ for which we know $O_j \not= O_k$, we check if $O_l = O_{l+1}$ for the center element of the range $[j \cdots k]$. If they are not equal, we have found an index $i$ that we want. If they are equal, then at least one of the ranges $[j \cdots l]$ or $[l \cdots k]$ will be such that $O_j \not= O_l$ or $O_l \not= O_k$, respectively, and we can continue the search in this smaller range of half the size.
         Hence, in at most $O(\log(\abs{w}))$ IO-generator queries, we can find an index $i$ such that $O_i \not= O_{i+1}$.

         Next, we claim that $\tilde{\Q} = \Q \cup \{q'_iw[i+1 \cdots i+1]\}$ and $\tilde{T} = T\cup\{w[i+2 \cdots n]\}$ are such that $q'_iw[i+1 \cdots i+1] \not\in \Q$ and $(\tilde{Q}, \tilde{T})$ is \(\D\)-separable, that is, we have strictly expanded $\Q$ and we still maintain separability.

          From the definition of $q'_j$'s and the determinism of the finite automaton, we know that $\delta'(q'_i, w_{i+1}) = q'_{i+1}$, where we use $x_i$ to denote $x_{[i \cdots i]}$.
          From the construction of $\D'$ based on $(\Q, \T)$, recall that $q'_i$ and $q'_{i+1}$ are in $\Q$, and $q'_i w_{i+1} \equiv_T q'_{i+1}$.
          Suppose $q'_i w_{i+1} \in \Q$. Since, no two distinct words in $\Q$ are equivalent, we have $q'_i w_{i+1} = q'_{i+1}$.
          But then, $O_i = \Output_\D(q'_iw[i+1 \cdots n])= \Output_\D(q'_iw_{i+1} w[i+2 \cdots n]) = \Output_\D(q'_{i+1}w[i+2 \cdots n]) = O_{i+1}$ contradicting the choice of $i$.
          Hence, we can conclude that $q'_i w_{i+1} \not\in \Q$.

          We need to show that $(\tilde{Q}, \tilde{T})$ is \(\D\)-separable.
          Observe that $(\Q, \T)$ is \(\D\)-separable implies that any two words in $\Q$ are distinguishable using $\T$.
          Further, $q'_i w_{i+1} \equiv_T q'_{i+1}$, since $q'_{i+1} \in \Q$, separability also implies that $q'_i w_{i+1}$ is distinguishable from every $q \not= q'_{i+1}$ in $\Q$.
          So, we just need to show $q'_i w_{i+1}$ is distinguishable from $q'_{i+1}$ using $\tilde{\T}$.
          In fact, $w[i+2 \cdots n] \in \tilde{\T}$ distinguishes $q'_i w_{i+1}$ from $q'_{i+1}$, since, $\Output_\D(q'_i w_{i+1}w[i+2 \cdots n]) \not= \Output_\D(q'_{i+1} w[i+2 \cdots n])$ from the choice of $i$ such that $O_i \not= O_{i+1}$.
   \end{proof}

Next, we state the correctness of the finite automaton learning algorithm.

\begin{theorem}
\label{thm:main}
Algorithm \ref{algo:semantics} always terminates and outputs a finite automaton whose language is $L_\D$.
\end{theorem}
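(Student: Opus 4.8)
The plan is to prove the statement in two parts: partial correctness (if Algorithm~\ref{algo:semantics} halts, the automaton it outputs has language $L_\D$) and termination. Both parts hinge on a loop invariant, and termination additionally on the fact that $\abs{\Q}$ is a strictly increasing, bounded integer quantity. First I would fix the invariant that, each time control reaches Line~\ref{line:2}, the pair $(\Q,\T)$ is $\D$-separable. The base case is immediate: initially $\Q=\T=\{\varepsilon\}$ contains no two distinct words, hence is trivially $\D$-separable. For the inductive step, suppose $(\Q,\T)$ is $\D$-separable on entry to Line~\ref{line:2}. Lemma~\ref{lem:auxres4a} turns it into a $\D$-closed and $\D$-separable pair $(\tilde{\Q},\T)$ with $\Q\subseteq\tilde{\Q}$, which is then stored as the new $\Q$; so just before the hypothesis automaton is built, $(\Q,\T)$ is $\D$-closed and $\D$-separable, exactly the hypothesis of Definition~\ref{d:hypo_automat}, so the hypothesis automaton $\D'$ is well-defined (separability ensures at most one $q'\in\Q$ with $qe\equiv_{\T}^{\D}q'$). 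If the equivalence check then returns a counter-example $w$, Lemma~\ref{lem:auxres4} applies---its hypotheses, namely that $(\Q,\T)$ is $\D$-separable and $\D$-closed, that $\D'$ is the corresponding hypothesis automaton, and that $w$ is a counter-example, are all in force---and produces $q\in\Sigma^*\setminus\Q$ and $t\in\Sigma^*$ with $(\Q\cup\{q\},\T\cup\{t\})$ again $\D$-separable; this is precisely the pair stored before control returns to Line~\ref{line:2}, completing the induction.

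Partial correctness is then short: the algorithm can only halt through the branch that outputs $\D'$, and it does so exactly when the equivalence checker reports $L_\D=L_{\D'}$ for the hypothesis automaton $\D'$ built in the current iteration (well-defined by the invariant). By the specification of the equivalence checker in Problem~\ref{prob:main}, the output $\D'$ then satisfies $L_{\D'}=L_\D$.

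For termination I would argue that $\abs{\Q}$ strictly increases between successive visits to Line~\ref{line:2}. Between two such visits, $\Q$ is first replaced by $\tilde{\Q}\supseteq\Q$ (Lemma~\ref{lem:auxres4a}) and then, in the counter-example case, by $\tilde{\Q}\cup\{q\}$ with $q\notin\tilde{\Q}$ (Lemma~\ref{lem:auxres4}), so its size grows by at least one. Since every such $\Q$ lies in a $\D$-separable pair, Lemma~\ref{lem:auxres2} gives $\abs{\Q}\le N_\D$. A strictly increasing nonnegative integer bounded above by $N_\D$ attains at most $N_\D$ distinct values, so Line~\ref{line:2} is visited, and hence the equivalence check performed, only finitely many times (at most $N_\D$); in particular the equivalence check cannot return a counter-example once $\abs{\Q}=N_\D$, since Lemma~\ref{lem:auxres4} would then manufacture a $\D$-separable pair with $\abs{\Q}>N_\D$, contradicting Lemma~\ref{lem:auxres2}. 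It remains to note that each loop iteration performs only finitely much effective work: the closure step terminates in at most $N_\D$ expansions (Lemma~\ref{lem:auxres4a}), each required value of $\Output_\D$ is computed by Algorithm~\ref{algo:output_comp} with a constant number of IO-generator calls (Lemma~\ref{lem:auxres1}), building $\D'$ from a finite $\D$-closed pair is effective, the equivalence query is a single oracle call, and the expansion step of Lemma~\ref{lem:auxres4} uses $O(\log\abs{w})$ IO-generator calls. Hence Algorithm~\ref{algo:semantics} halts, and by the partial-correctness argument its output has language $L_\D$.

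I expect the main obstacle to be not the bounded-monovariant argument---which is essentially immediate once Lemmas~\ref{lem:auxres2} and~\ref{lem:auxres4} are in hand---but the careful bookkeeping of the loop invariant: checking at each invocation that the hypotheses of Lemmas~\ref{lem:auxres4a} and~\ref{lem:auxres4} are met, that the assignments in the loop body preserve $\D$-separability on re-entry to Line~\ref{line:2}, and that the hypothesis automaton of Definition~\ref{d:hypo_automat} is genuinely well-defined at every step.
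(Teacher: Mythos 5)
Your proposal is correct and follows essentially the same route as the paper's proof: correctness because the algorithm only outputs an automaton that passes the equivalence query, and termination because $\Q$ is strictly expanded each iteration while Lemma~\ref{lem:auxres2} bounds its size, with each step effectively computable. Your version simply makes explicit the loop invariant and the hypothesis-checking for Lemmas~\ref{lem:auxres4a} and~\ref{lem:auxres4} that the paper leaves implicit.
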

\begin{proof}
Correctness of the algorithm is straight-forward, since, it only outputs $\D'$ that passes the equivalence query with respect to $\D$.
Termination follows from the fact that each of the steps in the algorithm can be effectively computed using the Algorithm \ref{algo:output_comp} for computing $\Output_\D$ and counter-example generator for $L_\D$.
$\Q$ is strictly expanded in each iteration, and there is bound on the size of $\Q$.
\end{proof}

\begin{remark}
\label{rem:change_L*}
\rm{
   Our algorithm is similar to Angluin's algorithm, however, the technical development is performed using the notion of $\Output$ that generalizes two labels to multiple labels, and $\Output$ can be computed using IO-generator queries for our subclass of linear switched systems.
   }
\end{remark}

    \begin{example}
    \label{ex:semantics}
    \rm{
        Consider the switched system described in Example \eqref{ex:swsys}. We apply Algorithm \ref{algo:semantics} to learn a FA \(\D'\) that accepts the language, \(L_{\D}\). The Learner performs the following set of tasks:
        \begin{enumerate}[label = \arabic*., leftmargin = *]
        \item Set \(\Q = \T = \{\varepsilon\}\).
        \item Apply Algorithm \ref{algo:output_comp} to all \(w\in\{\varepsilon,e_1,e_2\}\). It is observed that \((\Q,\T)\) is \(\D\)-separable but not \(\D\)-closed. Indeed, \(\Output_{\D}(\varepsilon e_1 \varepsilon)\neq \Output(\varepsilon\varepsilon)\). Update \(\Q = \{\varepsilon,e_{1}\}\).
        \item Apply Algorithm \ref{algo:output_comp} to all \(w\in\{e_1e_1,e_1e_2\}\). It is observed that \((\Q,\T)\) is \(\D\)-separable but not \(\D\)-closed. Indeed, \(\Output_{\D}(\varepsilon e_2 \varepsilon)\neq \Output(\varepsilon\varepsilon)\) and \(\Output_{\D}(\varepsilon e_2 \varepsilon)\neq \Output(e_1\varepsilon)\). Update \(\Q = \{\varepsilon,e_{1},e_{2}\}\).
        \item Apply Algorithm \ref{algo:output_comp} to all \(w\in\{e_2 e_1,e_2 e_2\}\). It is observed that \((\Q,\T)\) is \(\D\)-separable and \(\D\)-closed. Construct the hypothesis FA \(\D'\) shown in Figure \ref{fig:hypo-automat1}. Checking for correctness of \(\D'\) with the counter-example generator, yields a counter-example \(w = e_1 e_2 e_2\). Update \(\Q = \{\varepsilon,e_{1},e_{2},e_{1}e_{2}\}\) and \(\T = \{\varepsilon, e_{2}\}\).
        \item Apply Algorithm \ref{algo:output_comp} to all \(w\in\{e_1 e_1 e_2, e_1 e_2 e_2, e_1 e_2 e_1, e_2 e_1 e_2, e_2 e_2 e_2, e_1 e_2 e_1 e_2, e_1 e_2 e_2 e_2\}\). It is observed that \((\Q,\T)\) is \(\D\)-separable and \(\D\)-closed. Construct the hypothesis FA \(\D'\) shown in Figure \ref{fig:hypo-automat2}. Checking for correctness of \(\D'\) does not yield a counter-example.
    \end{enumerate}
    We conclude that \(\D'\) obtained in Step 5. accepts the language, \(L_{\D}\).
    \begin{figure}[htbp]
    \centering
    \scalebox{0.8}{
        \begin{tikzpicture}[every path/.style={>=latex},base node/.style={draw,circle}]
            \node[base node] (a) at (0,0)  { $A_{1}$ };
            \node[base node] (b) at (2,0)  { $A_{2}$ };
            \node[base node] (c) at (0,-2)  { $A_{3}$ };

            \draw[->] (a) edge[bend left] (b);
            \draw[->] (b) edge (a);

            \draw[->] (a) edge[bend left] (c);
            \draw[->] (c) edge (a);

            \node (s) at (1,0.6) {$e_{2}$};
            \node (s) at (1,-0.2) {$e_{1},e_{2}$};
            \node (s) at (0.6,-1) {$e_{1}$};
            \node (s) at (-0.5,-1) {$e_{1},e_{2}$};

            \draw[->] (-1,0) -- (a);
        \end{tikzpicture}
        }
        \caption{Hypothesis FA \(\D'\) in Step 4. of Example \ref{ex:semantics}} \label{fig:hypo-automat1}
    \end{figure}
      \begin{figure}[htbp]
    \centering
        \scalebox{0.8}{
        \begin{tikzpicture}[every path/.style={>=latex},base node/.style={draw,circle}]
            \node[base node] (a) at (-2,2)  { $A_{1}$ };
            \node[base node] (b) at (0,2)  { $A_{2}$ };
            \node[base node] (d) at (-2,-2)  { $A_{3}$ };
            \node[base node] (c) at (0,-2)  { $A_{2}$ };

            \draw[->] (a) edge (b);
            \draw[->] (b) edge[bend right] (a);
            \draw[->] (b) edge (c);
            \draw[->] (c) edge[bend right] (b);
            \draw[->] (c) edge (d);
            \draw[->] (d) edge[bend right] (c);
            \draw[->] (a) edge (d);
            \draw[->] (d) edge[bend left] (a);

             \node (s) at (1,0.2) {$e_{1}$};
             \node (s) at (-0.2,0.2) {$e_{1}$};
             \node (s) at (-1.8,0.2) {$e_{1}$};
             \node (s) at (-2.9,0.2) {$e_{1}$};

             \node (s) at (-1,1.8) {$e_{2}$};
             \node (s) at (-1,2.6) {$e_{2}$};
             \node (s) at (-1,-2.6) {$e_{2}$};
             \node (s) at (-1,-1.8) {$e_{2}$};

            \draw[->] (-3,2) -- (a);
        \end{tikzpicture}
        }
        \caption{Hypothesis FA \(\D'\) in Step 5. of Example \ref{ex:semantics}} \label{fig:hypo-automat2}
    \end{figure}
    }
    \end{example}

    To wrap up, let us discuss the problem of learning the switched system.
    Given a switched system $\D$ with $d$ and $\Sigma$ known, Algorithm \ref{algo:semantics} outputs a switched system $\D'$ whose executions coincide with that of $\D$.
    \begin{corollary}
    \label{t:mainres}
        Algorithm \ref{algo:semantics} outputs a switched system $\D'$ such that $\exec_\D(x, w) = \exec_{\D'}(x,w)$ for every $x \in \R^d$ and $w \in \Sigma^*$.
    \end{corollary}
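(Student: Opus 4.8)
The plan is to derive the corollary directly from Theorem \ref{thm:main} together with the simple structural observation that an execution of a switched system is determined entirely by the initial continuous state and the sequence of matrix labels produced along the run, i.e., by $L_{\D}(w)$; hence language equivalence of the two automata is exactly the right notion to conclude execution equivalence.

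First I would check that the automaton $\D'$ returned by Algorithm \ref{algo:semantics} is itself a switched system in the sense of Definition \ref{d:swsys}. By Definition \ref{d:hypo_automat}, its label set is $\Lambda' = \{\Output_{\D}(q) \mid q \in \Q\}$, and each $\Output_{\D}(q)$ is one of the matrices $A_j$ labelling $\D$, so it lies in $\R^{d\times d}$ and is full-rank under the standing assumption on the subsystem matrices. Consequently $\exec_{\D'}(x,w)$ is well defined for every $x \in \R^{d}$ and $w \in \Sigma^*$.

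Next, by Theorem \ref{thm:main}, Algorithm \ref{algo:semantics} terminates and outputs $\D'$ with $L_{\D'} = L_{\D}$; in particular $L_{\D'}(w) = L_{\D}(w)$ for every $w \in \Sigma^*$. Fix $x \in \R^{d}$ and $w = e_1 e_2 \cdots e_n \in \Sigma^*$, and write $L_{\D}(w) = \overline{A}_0 \overline{A}_1 \cdots \overline{A}_n$. Applying Definition \ref{d:swsys_exec} with the $d\times 1$ state matrix $X = x$ gives $\exec_{\D}(x,w) = x_0 x_1 \cdots x_{n+1}$ with $x_0 = x$ and $x_{i+1} = \overline{A}_i x_i$. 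Since $L_{\D'}(w) = L_{\D}(w) = \overline{A}_0 \cdots \overline{A}_n$, the very same recurrence computes $\exec_{\D'}(x,w)$, so a trivial induction on $i$ shows the two sequences agree term by term, i.e. $\exec_{\D}(x,w) = \exec_{\D'}(x,w)$. As $x$ and $w$ were arbitrary, this proves the corollary.

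I do not expect a real obstacle here: the content is essentially bookkeeping built on Theorem \ref{thm:main}. The only point that warrants care is making the reduction explicit, namely that Definition \ref{d:swsys_exec} expresses the execution as a function of the initial state and the label word $L_{\D}(\cdot)$ alone, so that $L_{\D}=L_{\D'}$ (rather than any isomorphism between the automata) suffices. One may additionally remark that the identical argument extends verbatim to executions from a $d\times k$ state matrix via the column-wise interpretation of Definition \ref{d:swsys_exec}, though this is not needed for the stated corollary.
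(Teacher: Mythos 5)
Your proposal is correct and follows essentially the same route as the paper: the paper's proof also derives the corollary immediately from Theorem \ref{thm:main} ($L_{\D} = L_{\D'}$) combined with the observation that $\exec(x,w)$ depends only on $x$ and $L(w)$. Your additional bookkeeping (checking that $\D'$ is a switched system and unrolling the recurrence of Definition \ref{d:swsys_exec}) merely makes explicit what the paper leaves implicit.
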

    \begin{proof}
    This follows immediate from Theorem \ref{thm:main}, where we established that $L_\D = L_{\D'}$, and the fact that $\exec(x, w)$ depends only on $x$ and $L(w)$.
    \end{proof}

\begin{remark}
\label{rem:L*_compa}
\rm{
    Given the set of events, \(\Sigma\), the dimension of the subsystems matrices, \(d\), and the IO-generator and counter-example generators, Algorithm \ref{algo:semantics} learns an FA that accepts the semantics of the underlying FA of the unknown switched system under consideration. The learning technique employed in Algorithm \ref{algo:semantics} is an extension of the \(L^*\)-algorithm. In the \(L^*\)-algorithm, the Learner learns an event-deterministic unlabelled finite automaton that accepts a certain language \(L\), with the aid of an Oracle called the \emph{minimally adequate teacher} (MAT). An automaton \(\A\) under consideration in \cite{Angluin1987} is a tuple \((P,p_{0},\Gamma,F,\mu)\), where \(P\) is a finite set of nodes, \(p_{0}\in P\) is the unique initial node, \(\Gamma\) is a finite set of alphabets, \(F\subseteq P\) is a finite set of accepting (or final) nodes, and \(\mu:P\times\Gamma\to P\) is the node transition function. The language of \(\A\) is the set of all finite words (strings of alphabets) such that the automaton reaches a final node on reading them, i.e., a word \(w = w_{1}w_{2}\cdots w_{m}\), \(w_{k}\in\Gamma\), \(k=1,2,\ldots,m\), belongs to the language of \(\A\), if \(\mu(\cdots(\mu(\mu(p_{0},w_{1}),w_{2}),\cdots,w_{m})\in F\). The MAT knows \(L\) and answers two types of queries by the Learner: \emph{membership queries}, i.e., whether or not a given word belongs to \(L\), and \emph{equivalence queries}, i.e., whether a hypothesis automaton specified by the Learner is correct or not. If the language of the hypothesis automaton differs from \(L\), then the MAT responds to an equivalence query with a counter-example, which is a word that is misclassified by the hypothesis automaton. The class of automata considered in this paper differs structurally from the class of automata considered in \cite{Angluin1987} in the following ways:
    \begin{enumerate}[label = (\alph*), leftmargin = *]
        \item \(\D\) has \(0\)-many accepting nodes, and
        \item the nodes of \(\D\) are labelled with matrices.
    \end{enumerate}
    In Algorithm \ref{algo:semantics} we modify the \(L^*\)-algorithm to cater to learning of \(\D\). At this point, it is important to highlight that throughout this paper we have employed notations, terminologies and concepts from the version of \(L^*\)-algorithm presented in \cite{Tut_L*}. Loosely speaking, the IO-generator and counter-example generator together play the role of a MAT. Indeed, the IP-generator provide the Learner with finite traces of state trajectories of the unknown \(\D\) under consideration. The Learner then uses this information to compute \(\Output_{\D}(w)\) for \(w\in\Sigma^*\) that satisfy certain conditions. In addition, the the counter-example generator facilitates checking correctness of an FA hypothesized by the Learner.
    }
\end{remark}
     
    \begin{remark}
    \label{rem:label_compa}
    \rm{
    Earlier in \cite{Angluin2009} the role of labels on the nodes of an automaton were employed in the setting of the \(L^*\)-algorithm to aid the learning process. The authors allow the MAT to make an automaton easier to learn by adding binary scalar labels to its nodes, either carefully or randomly chosen. When the Learner performs a membership query for a string, then she not only receives whether it is accepting or not, but also is provided with the label of the node that the automaton reaches on its application. It is shown that if the node labels are distinct, then the learning process becomes easier, and if all the node labels are same, then the learning may require an exponential number of queries. The above set of observations does not extend readily to our setting due to the structural difference of our FA's with the class of automata considered in the \(L^*\)-algorithm described above. Indeed, our FA's do not have final nodes and labelling of the nodes with matrices is governed by the underlying switching rules of the system under consideration. Beyond identification of switched systems, our learning algorithm is applicable to the general setting of learning deterministic finite automaton with \(0\)-many final nodes and all nodes labelled with full-rank matrices.
    }
    \end{remark}

    \begin{remark}
    \label{rem:aids}
    \rm{
        Notice that the IO-generator and the counter-example generator can be thought of as a simulation model of the unknown switched system, \(\D\). In modern industrial setups, simulation is of prior importance. Such models for complex systems are often provided by the system manufacturers. The mathematical models of the system components and the constraints on their operations underlying the simulation model are typically not made known explicitly to the user, but the model can be used to study the system behaviour with respect to various sets of inputs prior to their application to the actual system. Given a simulation model that allows the set of operations by the user required for our setting, the Learner can generate finite traces of trajectories of a switched system with respect to sets of initial states and sequences of events. This serves for the purpose of Algorithm \ref{algo:output_comp}. For the generation of a counter-example, the Learner can apply sequences of events of increasing length (up to a sufficiently large number) and match the labels of the nodes reached on \(\D\) and \(\D'\).
    }
    \end{remark}
    
    We now move on to a set of experiments conducted to demonstrate the effectiveness and performance of our learning algorithm. 
	\section{Numerical experiments}
\label{s:num_ex}
    We first describe the implementation of our learning algorithm on a MATLAB R2020a platform. We will then demonstrate the performance of our algorithm on a set of examples.
    
    A primary requirement for the implementation of the proposed algorithm is the design of an IO-generator and a Counter-example
    generator. Towards this end, we construct a MATLAB routine \texttt{fa-oracle.m} that knows \(\D\) and can perform the following task: accept an input \((x,w)\) and output \(\exec_{\D}(x,w)\). The Learner routine \texttt{fa-learn.m} uses \texttt{fa-oracle.m} as both an IO-generator and a Counter-example generator. Using \texttt{fa-oracle.m} as an IO-generator is straightforward. Towards using it as a Counter-example generator, \texttt{fa-learn.m} performs the following tasks: (a) it fixes a hypothesis automaton \(\D'\), (b) chooses a large number \(L\), (c) computes \(\Output_{\D}(w)\) for all possible \(w\) of increasing length, one at a time, by means of Algorithm \ref{algo:output_comp} and the routine \texttt{fa-oracle.m}, and (d) matches \(\Output_{\D}(w)\) with \(\Output_{\D'}(w)\). This procedure is continued until either a counter-example \(w\) is obtained or all \(w\) of length \(i=1,2,\ldots,L\) are exhausted.
    
    We now present a set of experiments conducted in the above setting. The hardware platform used is an Intel 17-8550U, 8GB RAM, 1TB HDD machine with Windows 10 Operating System.
   
   Our first example is motivated by a practical application often encountered in systems with variable structures and/or multiple controllers.
\begin{example}
\label{ex:num_ex1}
\rm{
    Consider a linear plant with \(3\) modes of operations. Under a healthy condition, the plant follows a pre-specified schedule for mode selection. Whenever a fault occurs, the plant continues to dwell on the current mode of operation until the fault is cleared.

    The setting described above can be expressed as an internally event-driven switched system for which \(\D\) is as shown in Figure \ref{fig:fa2}. Let \(A_{1} = \pmat{0.2 & 0.4 & 0.8\\0.3 & 0.6 & 0.9\\0.5 & 1.5 & 1.5}\), \(A_{2} = \pmat{-1 & 0.1 & 0.2\\0.3 & -1 & 0.4\\0.5 & 0.6 & -1}\), \(A_{3} = \pmat{-0.1 & -0.2 & 0.3\\-0.1 & -0.4 & 0.6\\0.8 & 0.7 & -0.6}\).

    Notice that the matrices \(A_1, A_2\) and \(A_3\) are full-rank. The following steps are carried out:
    \begin{enumerate}[label = \arabic*., leftmargin = *]
        \item Set \(\Q = \T = \{\varepsilon\}\).
        \item Apply Algorithm \ref{algo:output_comp} to all \(w\in\{\varepsilon,\f,\g\}\). It is observed that \((\Q,\T)\) is \(\D\)-separable but not \(\D\)-closed. Indeed, \(\Output_{\D}(\varepsilon\cdot\g) \neq \Output_{\D}(\varepsilon)\). Update \(\Q = \{\varepsilon,\g\}\).
        \item Apply Algorithm \ref{algo:output_comp} to all \(w\in\{\g\cdot\f,\g\cdot\g\}\). It is observed that \((\Q,\T)\) is \(\D\)-separable and \(\D\)-closed. Construct the hypothesis FA \(\D'\) shown in Figure \ref{fig:hypo-automat3}. Checking for correctness of \(\D'\) with the counter-example generator, yields a counter-example \(w = \g\cdot\g\cdot \g\). Update \(\Q = \{\varepsilon,\g,\g\cdot\g\}\) and \(\T = \{\varepsilon,\g\}\).
        \item Apply Algorithm \ref{algo:output_comp} to all \(w\in\{\g\cdot\f,\g\cdot\g,\f\cdot\g,\g\cdot\f\cdot\g,\g\cdot\g\cdot\g,\g\cdot\g\cdot\f,\g\cdot\g\cdot\f\cdot\g,\g\cdot\g\cdot\g\cdot\f\}\). It is observed that \((\Q,\T)\) is \(\D\)-separable but not \(\D\)-closed. Indeed,
        \(\Output_{\D}(\g\cdot\g\cdot\g) \neq \Output_{\D}(\varepsilon)\), \(\Output_{\D}(\g\cdot\g\cdot\g) \neq \Output_{\D}(\g)\), \(\Output_{\D}(\g\cdot\g\cdot\g) \neq \Output_{\D}(\varepsilon\cdot\varepsilon)\). Update \(\Q = \{\varepsilon,\g,\g\cdot\g,\g\cdot\g\cdot\g\}\).
        \item Apply Algorithm \ref{algo:output_comp} to all \(w\in\{\g\cdot\g\cdot\g\cdot\f,\g\cdot\g\cdot\g\cdot\f\cdot\g,\g\cdot\g\cdot\g\cdot\g,\g\cdot\g\cdot\g\cdot\g\cdot\g\}\). It is observed that \((\Q,\T)\) is \(\D\)-separable and \(\D\)-closed. Construct the hypothesis FA \(\D'\) shown in Figure \ref{fig:hypo-automat4}. Checking for correctness of \(\D'\) with the counter-example generator does not yield a counterexample.
    \end{enumerate}
    We conclude that \(\D'\) is a FA that accepts the language, \(L_{\D}\).
\begin{figure}[htbp]
    \centering
        \scalebox{0.8}{
        \begin{tikzpicture}[every path/.style={>=latex},base node/.style={draw,circle}]
            \node[base node] (a) at (-4,0)  { $A_{1}$ };
            \node[base node] (b) at (-2,0)  { $A_{2}$ };
            \node[base node] (c) at (0,0)  { $A_{2}$ };
            \node[base node] (d) at (2,0)  { $A_{3}$ };

             \draw[->] (a) edge[loop above] (a);
             \draw[->] (b) edge[loop above] (b);
             \draw[->] (c) edge[loop above] (c);
             \draw[->] (d) edge[loop above] (d);
            \node (s) at (0,1.1) {$\f$};
            \node (s) at (-4,1.1) {$\f$};
            \node (s) at (-2,1.1) {$\f$};
            \node (s) at (2,1.1) {$\f$};
            \draw[->] (a) edge (b);
            \draw[->] (b) edge (c);
            \draw[->] (c) edge (d);
            \node (s) at (1,0.2) {$\g$};
            \node (s) at (-1,0.2) {$\g$};
            \node (s) at (-3,0.2) {$\g$};
            \draw[->] (d) edge[bend left] (a);
            \node (s) at (-1,-0.8) {$\g$};
            \draw[->] (-5,0) -- (a);
        \end{tikzpicture}
        }
        \caption{FA for Example \ref{ex:num_ex1}} \label{fig:fa2}
    \end{figure}
    \begin{figure}[htbp]
    \centering
    \scalebox{0.8}{
        \begin{tikzpicture}[every path/.style={>=latex},base node/.style={draw,circle}]
            \node[base node] (a) at (-2,0)  { $A_{1}$ };
            \node[base node] (b) at (0,0)  { $A_{2}$ };

             \draw[->] (a) edge[loop above] (a);
             \draw[->] (b) edge[loop above] (b);
            \node (s) at (0,1.1) {$\f,\g$};
            \node (s) at (-2,1.1) {$\f$};
            \draw[->] (a) edge (b);
            \node (s) at (-1,0.2) {$\g$};
            \draw[->] (-3,0) -- (a);
        \end{tikzpicture}
        }
        \caption{Hypothesis automaton \(\D'\) in Step 3. of Example \ref{ex:num_ex1}} \label{fig:hypo-automat3}
    \end{figure}
    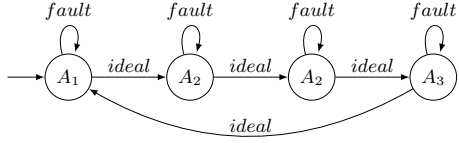
\begin{figure}[htbp]
    \centering
        \scalebox{0.8}{
        \begin{tikzpicture}[every path/.style={>=latex},base node/.style={draw,circle}]
            \node[base node] (a) at (-4,0)  { $A_{1}$ };
            \node[base node] (b) at (-2,0)  { $A_{2}$ };
            \node[base node] (c) at (0,0)  { $A_{2}$ };
            \node[base node] (d) at (2,0)  { $A_{3}$ };

             \draw[->] (a) edge[loop above] (a);
             \draw[->] (b) edge[loop above] (b);
             \draw[->] (c) edge[loop above] (c);
             \draw[->] (d) edge[loop above] (d);
            \node (s) at (0,1.1) {$\f$};
            \node (s) at (-4,1.1) {$\f$};
            \node (s) at (-2,1.1) {$\f$};
            \node (s) at (2,1.1) {$\f$};
            \draw[->] (a) edge (b);
            \draw[->] (b) edge (c);
            \draw[->] (c) edge (d);
            \node (s) at (1,0.2) {$\g$};
            \node (s) at (-1,0.2) {$\g$};
            \node (s) at (-3,0.2) {$\g$};
            \draw[->] (d) edge[bend left] (a);
            \node (s) at (-1,-0.8) {$\g$};
            \draw[->] (-5,0) -- (a);
        \end{tikzpicture}
        }
        \caption{Hypothesis automaton \(\D'\) in Step 5. of Example \ref{ex:num_ex1}} \label{fig:hypo-automat4}
    \end{figure}
    }
    \end{example}

    \label{rem:exmpl_compa}
    \rm{
        The FA considered in Example \ref{ex:num_ex1} resembles the automata used to implement \(L^*\)-algorithm in \cite[\S 2]{Tut_L*} without final nodes and with node labels. We note that the total number of membership queries and equivalence queries required for the learning task in \cite[Section 2]{Tut_L*} matches the total number of calls to Algorithm \ref{algo:output_comp} and the check for correctness of hypothesis FA in our setting.
        }

    We next conduct an experiment to verify scalability of our learning algorithm.
    \begin{example}
    \label{ex:num_ex2}
    \rm{
        We choose two benchmark examples described in \cite[\S4.4]{Neider2019}.\footnote{The benchmark examples under consideration are for Moore Machines, and does not involve matrices. We, therefore, choose the dimension of the subsystems matrices to cater to our purpose.} The following procedure is executed in each case:
        \begin{enumerate}[label = \arabic*), leftmargin = *]
            \item Construction of a switched system:
            \begin{enumerate}[label = \Alph*), leftmargin = *]
                \item We specify the number of nodes, \(\abs{\Q}\), the number of events, \(\abs{\Sigma}\), the number of labels, \(\abs{\Lambda}\), and the dimension of the subsystems matrices, \(d\).
                \item We randomly generate a FA \(\D\) that obeys the above features. The following are ensured: a) \(\D\) is complete in the sense that there is a valid transition corresponding to every pair of node and event, and (b) \(\D\) has a unique initial node.
                \item We randomly generate the matrices, \(A_j\in\R^{d\times d}\), \(j=1,2,\ldots,N\), where \(N\) is the number of labels used in \(\D\), \(N\leq\abs{\Lambda}\). It is ensured that each \(A_j\), \(j\in\{1,2,\ldots,N\}\), is full-rank.
            \end{enumerate}
            \item Learning the switched system constructed above: We employ the MATLAB routines \texttt{fa-oracle.m} and \texttt{fa-learn.m} as described above, to learn a switched system generated in Step 1). 
        \end{enumerate}
        We note the execution times of our algorithm in Table \ref{tab:algo_stat}.
        \begin{table}[htbp]
        \begin{center}
        \begin{tabular}{c c c c c}
            \hline\\
             \(\abs{\Q}\) & \(\abs{\Sigma}\) & \(\abs{\Lambda}\) & \(d\) & Execution time of Algorithm \ref{algo:output_comp}\\
             \hline\\
             \(1000\) & \(19\) & \(19\) & \(100\) & \(23743\) seconds (\(\approx 7\) hours)\\
             \hline\\
             \(2000\) & \(9\) & \(9\) & \(100\) & \(41489\) seconds (\(\approx 12\) hours)\\
             \hline
        \end{tabular}
        \caption{Data for Example \ref{ex:num_ex2}}\label{tab:algo_stat}
        \end{center}
        \end{table}
        }
    \end{example}
%
%
    \rm{
        It is observed that in each case the automaton is learnt correctly. However, with the increase in the size of the target automaton the number of queries (and hence the execution time) increases. On the one hand, since the learning procedure is offline, a longer time of execution for large-scale settings, as observed in Example \ref{ex:num_ex2}, is acceptable. However, it is of interest to derive mathematical guarantees on the performance of our algorithm with respect to the distribution of the elements of \(\Lambda\) on the elements of \(\Q\) (along the lines of \cite{Angluin2009}). We identify this problem as a direction for future work.
        }

    \section{Conclusion}
\label{s:concln}
    In this paper, we presented a learning algorithm for the identification of event-driven switched linear systems. We demonstrated our algorithm on various examples. Our future research directions include the design of active learning techniques for large-scale switched systems whose subsystems dynamics are not restricted to be linear structures and/or the available state-trajectories are noisy.


\end{document}